%

\documentclass[runningheads,envcountsect,envcountsame]{llncs}

\newif\ifignore 
\ignoretrue 

\usepackage{bm}
\usepackage{xspace}
\usepackage{url}
\usepackage{xcolor}
\usepackage{hyperref}
\hypersetup{
    colorlinks,
    linkcolor={black},
    citecolor={blue!50!black},
    urlcolor={blue!80!black}
}
\usepackage{tikzit}
\usetikzlibrary{circuits.ee.IEC}

\tikzstyle{white dot}=[inner sep=0mm, minimum size=1.5mm, draw=black, shape=circle, text depth=-0.2mm, draw=black, fill=white, tikzit category=nodes]
\tikzstyle{black dot}=[inner sep=0mm, minimum size=1.5mm, draw=black, shape=circle, draw=black, fill=black, tikzit category=nodes]
\tikzstyle{observed}=[inner sep=0mm, minimum size=5mm, draw=black, shape=circle, text depth=-0.2mm, draw=white, tikzit draw=gray, fill=white, tikzit category=dag]
\tikzstyle{latent}=[inner sep=0mm, minimum size=5mm, draw=black, shape=circle, text depth=-0.2mm, draw=black, fill=white, tikzit category=dag]
\tikzstyle{small box}=[shape=rectangle, text height=1.5ex, text depth=0.25ex, yshift=0.5mm, fill=white, draw=black, minimum height=6mm, yshift=-0.5mm, minimum width=6mm, font={\small}, tikzit category=boxes]
\tikzstyle{medium box}=[shape=rectangle, draw=black, fill=white, small box, minimum width=8mm, tikzit category=boxes]
\tikzstyle{semilarge box}=[shape=rectangle, draw=black, fill=white, small box, minimum width=12.5mm, tikzit category=boxes]
\tikzstyle{large box}=[shape=rectangle, draw=black, fill=white, small box, minimum width=15mm, tikzit category=boxes]
\tikzstyle{upground}=[circuit ee IEC, thick, ground, rotate=90, scale=1.5, inner sep=-2mm, tikzit shape=circle, tikzit fill=blue, tikzit category=points]
\tikzstyle{downground}=[circuit ee IEC, thick, ground, rotate=-90, scale=1.5, inner sep=-2mm, tikzit shape=circle, tikzit fill=green, tikzit category=points]
\tikzstyle{point}=[regular polygon, regular polygon sides=3, draw, scale=0.75, inner sep=-0.5pt, minimum width=9mm, fill=white, regular polygon rotate=180, tikzit category=points]
\tikzstyle{copoint}=[regular polygon, regular polygon sides=3, draw, scale=0.75, inner sep=-0.5pt, minimum width=9mm, fill=white, tikzit category=points]
\tikzstyle{uniform}=[point, fill=gray, tikzit shape=circle, scale=0.5]
\tikzstyle{label}=[font={\footnotesize}, text height=1.5ex, text depth=0.25ex, tikzit draw=blue, tikzit fill=white, tikzit category=labels]
\tikzstyle{left label}=[label, anchor=east, xshift=2mm, tikzit draw=green, tikzit fill=white, tikzit category=labels]
\tikzstyle{right label}=[label, anchor=west, xshift=-2mm, tikzit draw=purple, tikzit fill=white, tikzit category=labels]
\tikzstyle{disintegration}=[draw=black, fill={gray!50}, tikzit fill=gray, shape=rectangle, minimum width=1.6cm, minimum height=1.2cm, opacity=0.3]
\tikzstyle{empty diag}=[shape=rectangle, draw=darkgray, dashed, minimum width=8mm, minimum height=8mm, yshift=0.5mm]

\tikzstyle{diredge}=[->, >=latex]
\tikzstyle{dashed edge}=[-, dashed]

\usepackage{amsmath,amssymb,mathtools}
\usepackage{wrapfig}
\setlength{\intextsep}{.1\intextsep}
\setlength{\columnsep}{.7\columnsep}

\spnewtheorem*{theorem*}{Theorem}{\bfseries}{\itshape}
\spnewtheorem*{proposition*}{Proposition}{\bfseries}{\itshape}



\setlength{\arraycolsep}{2.5pt}


\newcommand{\counit}{%
\,\begin{tikzpicture}[yshift=-1mm]
\node [black dot] (a) at (0,0.35) {}; 
\draw (0,-0.3)--(a);
\end{tikzpicture}\,\xspace}

\newcommand{\uniform}{%
\,\begin{tikzpicture}[yshift=1.5mm]
\node [uniform] (a) at (0,-0.35) {}; 
\draw (a)--(0,0.3);
\end{tikzpicture}\,\xspace}
\newcommand{\comult}{%
\,\begin{tikzpicture}[yshift=0.5mm]
\node [black dot] (a) {};
\draw (-90:0.55)--(a);
\draw (a) -- (45:0.6);
\draw (a) -- (135:0.6);
\end{tikzpicture}\,\xspace}
\newcommand{\comultdots}{%
\,\begin{tikzpicture}[yshift=0.5mm]
\node [black dot] (a) at (0,0) {};
\node (b) at (0,0.4) {\footnotesize ...};
\draw (-90:0.55)--(a);
\draw (a) -- (45:0.6);
\draw (a) -- (135:0.6);
\end{tikzpicture}\,\xspace}

\newcommand{\QEDbox}{\square}
\newcommand{\QED}{\hspace*{\fill}$\QEDbox$}


\newcommand{\mat}{\bm}
\newcommand{\freeCDU}[1]{\ensuremath{\mathsf{FreeCDU}}(#1)}
\newcommand{\free}[1]{\ensuremath{\mathsf{Syn}_{\scriptscriptstyle #1}}\xspace}
\newcommand{\tns}{\ensuremath{\otimes}}
\newcommand{\catC}{\ensuremath{\mathsf{C}}}
\newcommand{\funF}{\ensuremath{\mathcal{F}}}
\newcommand{\Stoch}{\ensuremath{\mathsf{Stoch}}\xspace}
\newcommand{\BN}[1]{\ensuremath{\mathsf{BN}_{\scriptscriptstyle{#1}}\xspace}}
\newcommand{\MatRp}{\ensuremath{\mathsf{Mat}(\mathbb R^{\scriptscriptstyle +})}\xspace}

\newcommand{\cut}[1]{\ensuremath{\mathsf{Cut}_{\scriptscriptstyle #1}}}
\newcommand{\syncut}[1]{\ensuremath{\mathsf{cut}_{\scriptscriptstyle #1}}}

\newcommand{\syn}[1]{\overline{#1}}



\begin{document}

\title{%
    Causal Inference by String Diagram Surgery
}

\author{%
    Bart Jacobs\inst{1}
    \and
   Aleks Kissinger\inst{1}
    \and
    Fabio Zanasi\inst{2}
}

\authorrunning{B. Jacobs, A. Kissinger and F. Zanasi}

\institute{
    Radboud University, Nijmegen, The Netherlands   
     \and
    University College London, London, United Kingdom
}

\maketitle

\begin{abstract}
Extracting causal relationships from observed correlations is a growing area
in probabilistic reasoning, originating with the seminal work of Pearl and
others from the early 1990s. This paper develops a new, categorically oriented
view based on a clear distinction between syntax (string diagrams) and
semantics (stochastic matrices), connected via interpretations as
structure-preserving functors.

A key notion in the identification of causal effects is that of an
intervention, whereby a variable is forcefully set to a particular value
independent of any prior dependencies. We represent the effect of such an
intervention as an endofunctor which performs `string diagram surgery' within
the syntactic category of string diagrams. This diagram surgery in turn yields
a new, interventional distribution via the interpretation functor. While in
general there is no way to compute interventional distributions purely from
observed data, we show that this is possible in certain special cases using a
calculational tool called comb disintegration.

We demonstrate the use of this technique on a well-known toy example, where we
predict the causal effect of smoking on cancer in the presence of a
confounding common cause. After developing this specific example, we show this
technique provides simple sufficient conditions for computing interventions
which apply to a wide variety of situations considered in the causal inference
literature.

\keywords{causality; string diagrams; probabilistic reasoning}
\end{abstract}

\section{Introduction}\label{sec:intro}

An important conceptual tool for distinguishing correlation from
causation is the possibility of \textit{intervention}. For example, a
randomised drug trial attempts to destroy any confounding `common
cause' explanation for correlations between drug use and recovery by
randomly assigning a patient to the control or treatment group,
independent of any background factors. In an ideal setting, the
observed correlations of such a trial will reflect genuine causal
influence.
Unfortunately, it is not always possible (or ethical) to ascertain
causal effects by means of actual interventions. For instance, one is
unlikely to get approval to run a clinical trial on whether smoking
causes cancer by randomly assigning 50\% of the patients to smoke, and
waiting a bit to see who gets cancer. However, in certain situations
it is possible to predict the effect of such a hypothetical
intervention from purely observational data.

In this paper, we will focus on the problem of \textit{causal
identifiability}. For this problem, we are given observational data as
a joint distribution on a set of variables and we are furthermore
provided with a \textit{causal structure} associated with those
variables. This structure, which typically takes the form of a
directed acyclic graph or some variation thereof, tells us which
variables can in principle have a causal influence on others. The
problem then becomes whether we can measure how strong those causal
influences are, by means of computing an \textit{interventional}
distribution. That is, can we ascertain what would have happened if a
(hypothetical) intervention had occurred?

Over the past 3 decades, a great deal of work has been done in
identifying necessary and sufficient conditions for causal
identifiability in various special cases, starting with very specific
notions such as the \textit{back-door} and \textit{front-door}
criteria~\cite{PearlBook} and progressing to more general necessary and sufficient
conditions for causal identifiability based on the
\textbf{do}-calculus~\cite{DoComplete}, or combinatoric concepts such as
confounded components in semi-Makovian models~\cite{TianPearl,ShpitserPearl}.

This style of causal reasoning relies crucially on a delicate interplay
between syntax and semantics, which is often not made explicit in the literature.
The syntactic object of interest is the causal structure (e.g. a causal
graph), which captures something about our understanding of the world, and the
mechanisms which gave rise to some observed phenomena. The semantic object of
interest is the data: joint and conditional probability distributions on some
variables. Fixing a causal structure entails certain constraints on which
probability distributions can arise, hence it is natural to see distributions
satisfying those constraints as models of the syntax.

In this paper, we make this interplay precise using functorial semantics in
the spirit of Lawvere~\cite{lawvere1963functorial}, and develop basic
syntactic and semantic tools for causal reasoning in this setting. We take as
our starting point a functorial presentation of Bayesian networks similar to
the one appearing in~\cite{Fong12}, which in turn is based on the diagrammatic
presentation given in~\cite{CSBayes}. The syntactic role is played by string
diagrams, which give an intuitive way to represent morphisms of a monoidal
category as boxes plugged together by wires. Given a directed acyclic graph
(dag) $G$, we can form a free category $\free{G}$ whose arrows are (formal)
string diagrams which represent the causal structure syntactically. Structure-
preserving functors from $\free{G}$ to \Stoch, the category of stochastic
matrices, then correspond exactly to Bayesian networks based on the dag
$G$.

Within this framework, we develop the notion of intervention as an
operation of `string diagram surgery'. Intuitively, this cuts a string
diagram at a certain variable, severing its link to the
past. Formally, this is represented as an endofunctor on the syntactic
category $\syncut{X} \colon \free{G} \to \free{G}$, which propagates
through a model $\mathcal F \colon \free{G} \to \Stoch$ to send
observational probabilities $\mathcal F(\omega)$ to interventional
probabilities $\mathcal F(\syncut{X}(\omega))$.

The $\syncut{X}$ endofunctor gives us a diagrammatic means of
computing interventional distributions given complete knowledge of
$\mathcal F$.  However, more interestingly, we can sometimes compute
interventionals given only partial knowledge of $\mathcal F$, namely
some observational data. We show that this can also be done via a
technique we call \textit{comb disintegration}, which is a string
diagrammatic version of a technique called \textit{c-factorisation}
introduced by Tian and Pearl~\cite{TianPearl}. Our approach
generalises disintegration, a calculational tool whereby a joint state
on two variables is factored into a single-variable state and a
channel, representing the marginal and conditional parts of the
distribution, respectively. Disintegration has recently been
formulated categorically in~\cite{ClercDDG17} and using string
diagrams in~\cite{ChoJ17a}. We take the latter as a starting point,
but instead consider a factorisation of a three-variable state into a
channel and a \textit{comb}. The latter is a special kind of map which
allows inputs and outputs to be interleaved. They were originally
studied in the context of quantum communication protocols, seen as
games~\cite{QGames}, but have recently been used extensively in the
study of causally-ordered quantum~\cite{comb,PauloHierarchy} and
generalised~\cite{KissingerUijlen} processes. While originally
imagined for quantum processes, the categorical formulation given
in~\cite{KissingerUijlen} makes sense in both the classical case
(\Stoch) and the quantum. Much like Tian and Pearl's technique, comb
factorisation allows one to characterise when the confounding parts of
a causal structure are suitably isolated from each other, then exploit
that isolation to perform the concrete calculation of interventional
distributions.

However, unlike in the traditional formulation, the syntactic and
semantic aspects of causal identifiability within our framework
exactly mirror one-another.  Namely, we can give conditions for causal
identifiability in terms of factorisation a morphism in \free{G},
whereas the actual concrete computation of the interventional
distribution involves factorisation of its interpretation in
\Stoch. Thanks to the functorial semantics, the former immediately
implies the latter.

To introduce the framework, we make use of a running example taken from
Pearl's book~\cite{PearlBook}: identifying the causal effect of smoking on
cancer with the help of an auxiliary variable (the presence of tar in the
lungs). After providing some preliminaries on stochastic matrices and the
functorial presentation of Bayesian networks in Sections \ref{sec:stoch} and
\ref{sec:bn}, we introduce the smoking example in Section~\ref{sec:smoking}.
In Section~\ref{sec:surgery} we formalise the notion of intervention as string
diagram surgery, and in Section~\ref{sec:comb} we introduce the combs and
prove our main calculational result: the existence and uniqueness of comb
factorisations. In Section~\ref{sec:smoking-return}, we show how to apply this
theorem in computing the interventional distribution in the smoking example,
and in \ref{sec:general}, we show how this theorem can be applied in a more
general case which captures (and slightly generalises) the conditions given
in~\cite{TianPearl}. In Section~\ref{sec:conclusion}, we conclude and describe
several avenues of future work.

\section{Stochastic Matrices and Conditional Probabilities}\label{sec:stoch}

Symmetric monoidal categories (SMCs) give a very general setting for
studying processes which can be composed in sequence (via the usual
categorical composition $\circ$) and in parallel (via the monoidal
composition $\tns$). Throughout this paper, we will use \textit{string
diagram} notation~\cite{Sel2009-graphical} for depicting composition
of morphisms in an SMC. In this notation, morphisms are depicted as
boxes with labelled input and output wires, composition $\circ$ as
`plugging' boxes together, and the monoidal product $\tns$ as placing
boxes side-by-side. Identitiy morphisms are depicted simply as a wire
and the unit $I$ of $\tns$ as the empty diagram. The `symmetric' part
of the structure consists of symmetry morphisms, which enable us to
permute inputs and outputs arbitrarily. We depict these as wire-
crossings: \tikzfig{symmetry}. Morphisms whose domain is $I$ are
called \textit{states}, and they will play a special role throughout
this paper.



A monoidal category of prime interest in this paper is
$\Stoch$, whose objects are finite sets and morphisms $\mat f : A \to B$
are $|B| \times |A|$ dimensional stochastic matrices. That is, they
are matrices of positive numbers (including $0$) whose columns each sum to 1:
\[ \mat f = \{ \mat f_i^j \in \mathbb R^{\scriptscriptstyle +} \ |\  i \in A, j \in B \} 
\qquad\mbox{with}\qquad 
\textstyle\sum_j \mat f_i^j = 1, \mbox{ for all }i. \]

\noindent Note we adopt the physicists convention of writing row
indices as superscripts and column indices as subscripts.  Stochastic
matrices are of interest for probabilistic reasoning, because they
exactly capture the data of a conditional probability
distribution. That is, if we take $A := \{1, \ldots, m\}$ and $B :=
\{1, \ldots, n\}$, conditional probabilities naturally arrange
themselves into a stochastic matrix: 
\[
\mat f_i^j := P(B=j|A=i) \ \ \leadsto\ \ 
\mat f = 
\textrm{\footnotesize
$\left(\begin{matrix}
P(B=1|A=1) & \cdots & P(B=1|A=m) \\
\vdots & \ddots & \vdots \\
P(B=n|A=1) & \cdots & P(B=n|A=m)
\end{matrix}\right)$
}
\]

States, i.e. stochastic matrices from a trivial input $I :=\{*\}$, are
(non-conditional) probability distributions, represented as column
vectors. There is only one stochastic matrix with trivial output: the row vector consisting
only of $1$'s. The latter, with notation $\counit$ as on the right,
will play a special role in this paper (see \eqref{eq:stoch-copy-uniform} below).

Composition of stochastic matrices is matrix multiplication. In terms
of conditional probabilities, this corresponds to multiplication,
followed by marginalization over the shared variable: $\sum_B
P(C|B)P(B|A)$. Identities are therefore given by identity
matrices, which we will often express in terms of the Kronecker delta
function $\mat \delta_i^j$.

The monoidal product $\otimes$ in \Stoch is the cartesian product on
objects, and Kronecker product of matrices: $(\mat f \otimes \mat g)_{(i,j)}^{(k,l)} := \mat f_i^k \mat g_j^l$. We will
typically omit parentheses and commas in the indices, writing
e.g. $\mat h_{ij}^{kl}$ instead of $\mat h_{(i,j)}^{(k,l)}$ for an arbitrary matrix entry of $\mat h \colon A
\otimes B \to C \otimes D$. In terms of conditional probabilities,
the Kronecker product corresponds to taking product distributions. That is, if $\mat f$
represents the conditional probabilities $P(B|A)$ and $\mat g$ the
probabilities $P(D|C)$, then $\mat f\otimes \mat g$ represents $P(B|A)P(D|C)$.  \Stoch
also comes with a natural choice of `swap' matrices $\mat \sigma : A
\otimes B \to B \otimes A$ given by $\mat \sigma_{ij}^{kl} := \mat \delta_i^l
\mat \delta_j^k$, making it into a symmetric monoidal category.  Every object $A$ in \Stoch has three
other pieces of structure which will play a key role in our formulation of
Bayesian networks and interventions: the \textit{copy} map, the \textit{discarding} map, and the \textit{uniform state}:
\begin{equation}\label{eq:stoch-copy-uniform}
\left(\comult\right)_i^{jk} := \mat\delta_i^j \mat\delta_i^k
\qquad\qquad
\left(\counit\right)_i := 1
\qquad\qquad
\left(\uniform\right)^i := \frac{1}{|A|}
\end{equation}

\noindent Abstractly, this provides $\Stoch$ with the structure of a \textit{CDU
  category}.

\begin{definition}
A \emph{CDU category} (for \textbf{c}opy, \textbf{d}iscard,
\textbf{u}niform) is a symmetric monoidal category $(\catC, \tns, I)$
where each object $A$ has a copy map $\comult : A \rightarrow A \tns
A$, a discarding map $\counit : A \rightarrow I$, and a uniform state
$\uniform : I \rightarrow A$ satisfying the following equations:
\begin{equation}\label{eq:CDUaxioms}
	\tikzfig{co-assoc} \qquad \tikzfig{co-unit} \qquad \tikzfig{co-comm} \qquad \tikzfig{unit-causal}
\end{equation}
\noindent \emph{CDU functors} are symmetric monoidal functors between
CDU categories preserving copy maps, discard maps and uniform states.
\end{definition}

We assume that the CDU structure on $I$ is trivial and the CDU
structure on $A \otimes B$ is constructed in the obvious way from the
structure on $A$ and $B$. We also use the first equation in
\eqref{eq:CDUaxioms} to justify writing `copy' maps with arbitrarily
many output wires: \comultdots.

Similar to \cite{BonchiSZ18}, we can form the free CDU category
$\freeCDU{X,\Sigma}$ over a pair $(X, \Sigma)$ of a generating set of
objects $X$ and a generating set $\Sigma$ of typed morphisms $f \colon
u \to w$, with $u,w \in X^{\star}$ as follows. The category
$\freeCDU{X,\Sigma}$ has $X^{\star}$ as set of objects, and morphisms
the string diagrams constructed from the elements of $\Sigma$ and maps
$\scalebox{0.6}{\comult} \colon x \to x \tns x$,
$\scalebox{0.6}{\counit} \colon x \to I$ and $\scalebox{0.6}{\uniform}
\colon I \to x$ for each $x \in X$, taken modulo the equations
\eqref{eq:CDUaxioms}.

\begin{lemma}
$\Stoch$ is a CDU category, with CDU structure defined as in \eqref{eq:stoch-copy-uniform}.
\end{lemma}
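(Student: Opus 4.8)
The plan is to verify directly that the copy, discard, and uniform structure defined in \eqref{eq:stoch-copy-uniform} satisfies the four CDU axioms in \eqref{eq:CDUaxioms}, working throughout with explicit matrix entries in terms of Kronecker deltas and using the fact that composition in $\Stoch$ is matrix multiplication (hence summation over the shared index) and that $\otimes$ is the Kronecker product. Before touching the axioms, I would make two preliminary observations: first, that all three maps have nonnegative entries and the relevant columns sum to $1$ — for $\comult$ the column indexed by $i$ has entries $\mat\delta_i^j\mat\delta_i^k$ summing to $1$ over $(j,k)$, for $\counit$ the single row of $1$'s trivially has each column summing to $1$, and for $\uniform$ the entries $1/|A|$ sum to $1$ — so that these are genuinely morphisms of $\Stoch$; and second, that the CDU structure on $I$ and on $A\otimes B$ built from the pieces on $A$ and $B$ matches the stipulated `obvious' construction, which is immediate from the Kronecker-product formula.

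The core of the argument is then four short index computations. \emph{Coassociativity}: both $(\comult\otimes\idmap)\after\comult$ and $(\idmap\otimes\comult)\after\comult$ have $(i)$-th column entry $\mat\delta_i^j\mat\delta_i^k\mat\delta_i^l$ at output $(j,k,l)$, after summing out the single intermediate index. \emph{Counitality}: $(\counit\otimes\idmap)\after\comult$ has entry $\sum_{k}\mat\delta_i^k\mat\delta_i^j = \mat\delta_i^j$, i.e. the identity matrix, and symmetrically on the other side. \emph{Cocommutativity}: postcomposing $\comult$ with the swap $\mat\sigma$ sends $\mat\delta_i^j\mat\delta_i^k$ to $\mat\delta_i^k\mat\delta_i^j$, which is the same matrix. \emph{The fourth axiom} (\texttt{unit-causal}, i.e. discarding the uniform state yields the empty diagram on $I$, or equivalently $\counit\after\uniform = \idmap[I]$): this is the scalar computation $\sum_{i\in A} 1\cdot\frac{1}{|A|} = 1$. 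I would also record that $\counit$ is natural — i.e. $\counit\after\mat f = \counit$ for every stochastic $\mat f$ — since this is exactly the column-sum condition $\sum_j \mat f_i^j = 1$, and this is the property that makes the discarding maps behave coherently, though strictly it is the functoriality claim rather than a CDU axiom.

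For the `CDU functor' half of the statement there is nothing to prove here, since the lemma only asserts that $\Stoch$ is a CDU category; but if one wanted to note that the symmetric monoidal structure is the standard one, that was already established in the preceding paragraph where $\mat\sigma$ was defined. The one mild subtlety — the closest thing to an obstacle — is purely bookkeeping: being careful that the Kronecker/Cartesian product on objects together with the entrywise product of matrices really does make the diagrammatic equations in \eqref{eq:CDUaxioms}, which are stated for a single generic object, hold for compound objects $A\otimes B$ as well; this follows because each axiom is an identity of multilinear expressions in Kronecker deltas that factorizes over the tensor components. No genuine difficulty arises, so the proof is essentially a checklist of four one-line verifications plus these coherence remarks.
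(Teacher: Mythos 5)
Your verification is correct, and it is exactly the routine check the paper has in mind: the paper states this lemma without proof, treating the four axioms of \eqref{eq:CDUaxioms} as immediate index computations with Kronecker deltas, which is precisely what you carry out (your preliminary check that the three maps are stochastic and your remark on the $A\otimes B$ structure are fine, and the only slip is cosmetic---in coassociativity one sums over two intermediate indices, not one).
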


\begin{wrapfigure}{r}{0pt}
\begin{minipage}{9em}
\vspace{-5mm}
\begin{equation}
\label{eq:discarding-final}
\tikzfig{causality}
\end{equation}
\end{minipage}
\end{wrapfigure}
An important feature of $\Stoch$ is that $I = \{ \star \}$ is the
final object, with $\counit \colon B \to I$ the map provided by the
universal property, for any set $B$. This yields
equation~\eqref{eq:discarding-final} on the right, for any $\mat f \colon A
\to B$, justifying the name ``discarding map'' for $\counit$.

We conclude by recording another significant feature of $\Stoch$:
\emph{disintegration}~\cite{ClercDDG17,ChoJ17a}. In probability
theory, this is the mechanism of factoring a joint probability
distribution $P(AB)$ as a product of the first marginal $P(A)$ and a
conditional distribution $P(B|A)$. We recall from
\cite{ChoJ17a} the string diagrammatic rendition of this
process. We say that a morphism $\mat f \colon X \to Y$ in $\Stoch$ has
\emph{full support} if, as a stochastic matrix, it has no zero
entries. When $\mat f$ is a state, it is a standard result that full
support ensures uniqueness of disintegrations of~$\mat f$.

\begin{proposition}[Disintegration]
\label{thm:disintStoch}
For any state $\mat\omega \colon I \to A \otimes B$ with full support,
there exists unique morphisms $\mat a \colon I \to A, \mat b \colon A
\to B$ such that:
\begin{equation}\label{eq:disint-def}
\scalebox{0.8}{\tikzfig{disint}}
\end{equation}
\end{proposition}

\noindent Note that equation \eqref{eq:discarding-final} and the CDU
rules immediately imply that the unique $\mat a \colon I \to A$ in
Proposition~\ref{thm:disintStoch} is the marginal of $\mat\omega$ onto
$A$: $\scalebox{0.6}{\tikzfig{marginalA}}$.

\section{Bayesian Networks as String Diagrams}\label{sec:bn}

Bayesian networks are a widely-used tool in probabilistic
reasoning. They give a succinct representation of conditional
(in)dependences between variables as a directed acyclic
graph. Traditionally, a Bayesian network on a set of variables $A, B,
C, \ldots$ is defined as a directed acyclic graph (dag) $G$, an
assignment of sets to each of the nodes $V_G := \{A, B, C, \ldots\}$
of $G$ and a joint probability distribution over those variables which
factorises as $P(V_G) = \prod_{A \in V_G} P(A\,|\,\textrm{Pa}(A))$ where
$\textrm{Pa}(A)$ is the set of parents of $A$ in $G$. Any joint
distribution that factorises this way is said to satisfy the \textit{global Markov property} with respect to the dag $G$.
Alternatively, a Bayesian network can be seen as a dag equipped
with a set of conditional probabilities $\{ P(A \,|\, \textrm{Pa}(A)) \mid  A \in V_G \}$
which can be combined to form the joint state. Thanks to disintegration, these
two perspectives are equivalent.

Much like in the case of disintegration in the previous section,
Bayesian networks have a neat categorical description as string
diagrams in the category $\Stoch$
\cite{CSBayes,Fong12,JacobsZ16,JacobsZ18}. For example, here is a Bayesian
network in its traditional depiction as a dag with an associated joint
distribution over its vertices, and as a string diagram
in \Stoch:
\begin{equation*} 
\begin{array}{ccc}
\begin{array}{c}
\scalebox{0.8}{\tikzfig{dag}}
\\[+3em]
P(ABCDE) = 
\\
P(A)P(B|A)P(D|A)P(C|BD)P(E|D)
\end{array}
& \hspace*{2em} &
\scalebox{0.8}{\tikzfig{dag-string-diagram-2}}
\end{array}
\end{equation*}
In the string diagram above, the stochastic matrix $\mat a \colon I \to A$ contains the probabilities $P(A)$, $\mat b \colon B \to A$ contains the conditional probabilities $P(B|A)$, $\mat c \colon B \otimes D \to C$ contains $P(C|BD)$, and so on. The entire diagram is then equal to a state $\mat \omega \colon I \to A \otimes B \otimes C \otimes D \otimes E$ which represents $P(ABCDE)$.

Note the dag and the diagram above look similar in structure. The main
difference is the use of copy maps to make each variable (even those
that are not leaves of the dag, $A$, $B$ and $D$) an output of the
overall diagram.  This corresponds to a variable being
\textit{observed}. We can also consider Bayesian networks with
\textit{latent} variables, which do not appear in the joint
distribution due to marginalisation. Continuing the example above, making
$A$ into a latent variable yields the following depiction as a string
diagram:
\[ \begin{array}{ccc}
\begin{array}{c}
\scalebox{0.8}{\tikzfig{dag-latent}}
\\[+3em]
P(BCDE) = 
\\
\sum_A P(A)P(B|A)P(D|A)P(C|BD)P(E|D)
\end{array}
& \hspace*{1.5em} &
\scalebox{0.8}{\tikzfig{dag-string-diagram-latent-2}}
\end{array} \]

In general, a Bayesian network (with possible latent variables), is a string diagram in \Stoch that (1) only has outputs and (2) consists only of copy maps and boxes which each have exactly one output.

By `a string diagram in \Stoch', we mean not only the stochastic matrix itself, but also its decomposition into components. We can formalise exactly what we mean by taking a perspective on Bayesian networks which draws inspiration from Lawvere's functorial semantics of algebraic theories \cite{Lawvere}. In this perspective, which elaborates on \cite[Ch. 4]{Fong12},
we maintain a conceptual distinction between the purely syntactic
object (the diagram) and its probabilistic interpretation.


Starting from a dag $G = (V_G,E_G)$, we construct a free CDU category $\free{G}$ which provides the syntax of causal structures labelled by $G$. The objects of $\free{G}$ are generated by the vertices of $G$, whereas the morphisms are generated by the following signature:
\[ \Sigma_G = \left\{ \; \scalebox{0.7}{\tikzfig{channel}} 
   \ \  \middle| \ A \in V_G \text{ with parents } B_1, \dots, B_k \in V_G \right\}\]

\noindent Then $\free{G} := \freeCDU{V_G,\Sigma_G}$. \footnote{Note that $E_G$ is implicitly used in the construction of $\free{G}$: the edges of $G$ determine the parents of a vertex, and hence the input types of the symbols in $\Sigma_G$.} The following
result establishes that models (\emph{\`{a} la} Lawvere) of $\free{G}$
coincide with $G$-based Bayesian networks.



\begin{proposition}\label{prop:imaps}
There is a 1-1 correspondence between Bayesian networks based on the
dag $G$ and CDU functors of type $\free{G}\rightarrow \Stoch$.
\end{proposition}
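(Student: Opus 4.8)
The plan is to reduce the statement to the universal property of the free CDU category. First recall, from the discussion just before the proposition, that a Bayesian network based on $G$ may equivalently be presented (via disintegration) as the following data: for each vertex $A \in V_G$ a finite set $D_A$, and for each $A$ with parents $B_1,\dots,B_k$ a conditional probability table $P(A \mid B_1,\dots,B_k)$ over the chosen sets. By Section~\ref{sec:stoch} such a table \emph{is} a stochastic matrix of type $D_{B_1} \tns \cdots \tns D_{B_k} \to D_A$, i.e.\ a morphism of $\Stoch$ of exactly the type obtained by applying the set-assignment to the generator $f_A \in \Sigma_G$. I would take this conditional-probability presentation as the working definition of a $G$-based Bayesian network, indicating at the end how the joint-distribution presentation is recovered.

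Next I would invoke the universal property of $\freeCDU{X,\Sigma}$. As in \cite{BonchiSZ18}, the construction $\freeCDU{-}$ is left adjoint to the forgetful functor from CDU categories to pairs (generating objects, typed generating morphisms); concretely, for any CDU category $\catC$, CDU functors $\freeCDU{X,\Sigma} \to \catC$ are in natural bijection with pairs consisting of a function $h \colon X \to \mathrm{Ob}(\catC)$ together with, for each generator $g \colon u \to w$ in $\Sigma$, a morphism $\hat g \colon h^{\star}(u) \to h^{\star}(w)$ of $\catC$, where $h^{\star}$ is the monoidal extension of $h$ to words. The point is that any such assignment extends, by structural recursion over the formation of string diagrams from $\Sigma$, the copy/discard/uniform generators, and the SMC operations, to a functor; and this is well defined precisely because $\catC$ is a CDU category, so the equations \eqref{eq:CDUaxioms} — the only relations imposed in $\freeCDU{X,\Sigma}$ — hold automatically in the image, and uniqueness is immediate since the generators generate.

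Now instantiate this with $\catC = \Stoch$ (a CDU category by the Lemma above), $X = V_G$, and $\Sigma = \Sigma_G$. A CDU functor $\funF \colon \free{G} \to \Stoch$ is then exactly a choice of finite set $\funF(A)$ for each vertex $A$, together with, for each $A$ with parents $B_1,\dots,B_k$, a stochastic matrix $\funF(f_A) \colon \funF(B_1) \tns \cdots \tns \funF(B_k) \to \funF(A)$. By the identification of the first paragraph this is precisely the data of a Bayesian network based on $G$, and the assignment $\funF \mapsto \bigl(\{\funF(A)\}_{A},\{\funF(f_A)\}_{A}\bigr)$ is a bijection by the universal property. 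Finally, the joint distribution of the resulting network is recovered as $\funF(\mat\omega)$, where $\mat\omega \colon I \to \bigotimes_{A \in V_G} A$ is the canonical state of $\free{G}$ built by composing the generators $f_A$ with copy maps so every vertex becomes an output (as in the examples earlier in this section); functoriality together with the description of composition and $\tns$ in $\Stoch$ makes $\funF(\mat\omega)$ equal to $\prod_{A \in V_G} P(A \mid \Pa(A))$, the global Markov factorisation, so the two presentations of a Bayesian network agree.

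The main obstacle is the universal property of $\freeCDU{X,\Sigma}$ used in the second paragraph: spelling out the structural recursion that defines a CDU functor on \emph{all} string diagrams and checking it is well defined modulo \eqref{eq:CDUaxioms} and unique. This is routine — entirely parallel to the corresponding statements for free PROPs / free symmetric monoidal categories in \cite{BonchiSZ18} — but it is the only step that is not pure bookkeeping. A secondary point needing a little care is the disintegration-based passage between the joint-distribution and conditional-probability presentations of a Bayesian network: in the direction we actually use (conditionals determine the joint) nothing is needed, but to go back one would need a full-support hypothesis for uniqueness, so it is cleanest to fix the conditional-probability presentation as the definition.
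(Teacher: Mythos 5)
Your proposal is correct and follows essentially the same route as the paper: both take the conditional-probability presentation of a Bayesian network and observe that this data is exactly an assignment of finite sets to the vertices $V_G$ and stochastic matrices to the generators $\Sigma_G$, which corresponds bijectively to a CDU functor $\free{G}\to\Stoch$. The only difference is presentational — you make explicit the universal property of $\freeCDU{V_G,\Sigma_G}$, which the paper's proof uses implicitly when it defines $\mathcal F$ by its action on generators and declares the two assignments mutually inverse.
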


\begin{proof} 
In one direction, consider a Bayesian network consisting of the dag
$G$ and, for each node $A \in V_G$, an assignment of a set $\tau(A)$
and a conditional probability $P(A | \textrm{Pa}(A))$. This data
yields a CDU functor $\mathcal F : \free{G}\rightarrow \Stoch$, defined by
the following mappings:
\[
\mathcal F \ \ ::\ \ 
\left\{
\begin{array}{rl}
A \in V_G & \ \ \mapsto\ \  \tau(A) \\
\scalebox{0.7}{\tikzfig{channel}} & \ \ \mapsto\ \ 
\big( \mat f_{i_1...i_n}^j := P(A=j | \textrm{Pa}(A)=(i_1,\ldots,i_n)) \ \big)
\end{array}
\right.
\]
Conversely, let $\funF \colon \free{G}\rightarrow \Stoch$ be a CDU
functor. This defines a $G$-based Bayesian network by setting $\tau(A)
:= \funF(A)$ and $P(A=j | \textrm{Pa}(A)=(i_1,\ldots,i_n)) :=
\funF(a)_{i_1...i_n}^j$. It is
immediate that these two mappings are inverse to each other, thus
proving the statement. \QED
\end{proof}

This proposition
justifies the following definition of a category $\BN{G}$ of $G$-based
Bayesian networks: objects are CDU functors $\free{G}\rightarrow
\Stoch$ and arrows are monoidal natural transformations between them.


\section{Towards Causal Inference: the Smoking Scenario}\label{sec:smoking}

We will motivate our approach to causal inference via a classic
example, inspired by the one given in the Pearl's
book~\cite{PearlBook}. Imagine a dispute between a scientist and a
tobacco company. The scientist claims that smoking causes cancer. As a
source of evidence, the scientist cites a joint probability
distribution $\omega$ over variables $S$ for smoking and $C$ for
cancer, which disintegrates as in~\eqref{eq:naive-smoking} below, with
matrix $\mat c = \left(\begin{smallmatrix} 0.9 & 0.7 \\ 0.1 &   0.3
\end{smallmatrix}\right)$.  Inspecting this $\mat c : S \to C$, the
scientist notes that the probability of getting cancer for smokers
($0.3$) is three times as high as for non- smokers ($0.1$). Hence, the
scientist claims that smoking has a significant causal effect on
cancer.

\begin{wrapfigure}{r}{0pt}
\begin{minipage}{11em}
\begin{equation}\label{eq:naive-smoking}
\scalebox{0.8}{\tikzfig{smoking-disint}}
\end{equation}
\end{minipage}
\end{wrapfigure}
An important thing to stress here is that the scientist draws this
conclusion using not only the observational data $\mat \omega$ but also
from an assumed \textit{causal structure} which gave rise to that
data, as captured in the diagram in equation~\eqref{eq:naive-smoking}.
That is, rather than treating diagram \eqref{eq:naive-smoking} simply
as a calculation on the observational data, it can also be treated as
an assumption about the actual, physical mechanism that gave rise to
that data. Namely, this diagram encompasses the assumption
that there is some prior propensity for people to smoke captured
by $\mat s : I \to S$, which is both observed and fed into some other
process $\mat c : S \to C$ whereby an individuals choice to smoke
determines whether or not they get cancer.

\begin{wrapfigure}{r}{0pt}
\begin{minipage}{12em}
\begin{equation}
\label{eq:smoking-with-latent}
\scalebox{0.8}{\tikzfig{smoking-with-latent}}
\end{equation}
\end{minipage}
\end{wrapfigure}
The tobacco company, in turn, says that the scientists' assumptions
about the provenance of this data are too strong. While they
concede that \textit{in principle} it is possible for smoking to have
some influence on cancer, the scientist should allow for the
possibility that there is some latent common cause (e.g. genetic
conditions, stressful work environment, etc.) which leads people both
to smoke and get cancer. Hence, says the tobacco company, a `more
honest' causal structure to ascribe to the data $\omega$ is~\eqref{eq:smoking-with-latent}. This structure then allows for
either party to be correct. If the scientist is right, the output of
$\mat c : S \otimes H \to C$ depends mostly on its first input, i.e. the
causal path from smoking to cancer. If the tabacco company is right,
then $\mat c$ depends very little on its first input, and the correlation
between $S$ and $C$ can be explained almost entirely from the hidden
common cause.

So, who is right after all? Just from the observed distribution
$\mat \omega$, it is impossible to tell. So, the scientist proposes a
clinical trial, in which patients are randomly required to smoke or
not to smoke. We can model this situation by replacing $\mat s$ in
\eqref{eq:smoking-with-latent} with a process that ignores its inputs
and outputs the uniform state. Graphically, this looks like `cutting'
the link $\mat s$ between $H$ and $S$:
\begin{equation}
\label{eq:smoking-with-latent-cut}
\scalebox{0.8}{\tikzfig{smoking-with-latent}}
\qquad
\leadsto
\qquad
\scalebox{0.8}{\tikzfig{smoking-with-latent-cut}}
\end{equation}
This captures the fact that variable $S$ is now
randomised and no longer dependent on any background factors.
This new distribution $\mat \omega'$ represents the data the scientist
would have obtained had they run the trial. That is, it gives the results
of an \textit{intervention} at $\mat s$. If this $\mat \omega'$ \textit{still}
shows a strong correlation between smoking and cancer, one can
conclude that smoking indeed causes cancer even when we assume the
weaker causal structure~\eqref{eq:smoking-with-latent}.

Unsurprisingly, the scientist fails to get ethical approval to run the
trial, and hence has only the observational data $\mat \omega$ to work
with. Given that the scientist only knows $\mat \omega$ (and not $\mat
c$ and $\mat h$), there is no way to compute $\mat \omega'$ in this
case. However, a key insight of statistical causal inference is that
sometimes it \textit{is} possible to compute interventional
distributions from observational ones. Continuing the smoking example,
suppose the scientist proposes the following revision to the causal
structure: they posit a structure \eqref{eq:smoking-hidden-2} that
includes a third observed variable (the presence of $T$ of tar in the
lungs), which completely mediates the causal effect of smoking on
cancer.

\begin{wrapfigure}{r}{0pt}
\begin{minipage}{15.5em}
\begin{equation}\label{eq:smoking-hidden-2}
\scalebox{0.8}{\tikzfig{smoking-joint}} = \  
   \scalebox{0.8}{\tikzfig{smoking-hidden-2}}
\end{equation}
\end{minipage}
\end{wrapfigure}
As with our simpler structure, the diagram \eqref{eq:smoking-hidden-2}
contains some assumptions about the provenance of the data
$\mat \omega$. In particular, by omitting wires, we are asserting there is
no \textit{direct} causal link between certain variables. The
absence of an $H$-labelled input to $\mat t$ says there is no direct causal
link from $H$ to $T$ (only mediated by $S$), and the absence of an
$S$-labelled input wire into $\mat c$ captures that there is no direct
causal link from $S$ to $C$ (only mediated by $T$). In the traditional
approach to causal inference, such relationships are typically
captured by a graph-theoretic property called \textit{d-separation} on
the dag associated with the causal structure.

We can again imagine intervening at $S$ by replacing $\mat s : H \to S$ by
$\uniform \circ \counit$. Again, this `cutting' of the diagram will
result in a new interventional distribution $\mat \omega'$. However, unike
before, it \textit{is} possible to compute this distribution from the
observational distribution $\mat\omega$.

However, in order to do that, we first need to develop the appropriate categorical framework. In Section \ref{sec:surgery}, we will model `cutting' as a functor. In \ref{sec:comb}, we will introduce a generalisation of disintegration, which we call \textit{comb
  disintegration}. These tools will enable us to compute $\mat \omega'$ for $\mat \omega$, in Section \ref{sec:smoking-return}.

\section{Interventional Distributions as Diagram Surgery}\label{sec:surgery}



The goal of this section is to define the `cut' operation
in~\eqref{eq:smoking-with-latent-cut} as an endofunctor on the category of Bayesian
networks. First, we observe that such an operation exclusively concerns the
string diagram part of a Bayesian network: following the functorial semantics
given in Section~\ref{sec:bn}, it is thus appropriate to define cut as an
endofunctor on $\free{G}$, for a given dag $G$.

\begin{definition} 
For a fixed node $A \in V_G$ in a graph $G$, let $\syncut{A} \colon
\free{G} \to \free{G}$ be the CDU
functor freely obtained by the following action on the generators
$(V_G, \Sigma_G)$ of $\free{G}$:
\begin{itemize}
\item For each object $B \in V_G$, $\syncut{A}(B) = B$.

\item $\syncut{A}(\,\scalebox{0.6}{\tikzfig{channel}}\hspace*{-0.8em})
  = \scalebox{0.6}{\tikzfig{channelCut}}\!$ and
  $\syncut{A}(\,\scalebox{0.6}{\tikzfig{channelB}}\hspace*{-0.8em}) =
  \scalebox{0.6}{\tikzfig{channelB}}$ for any other
  $\scalebox{0.6}{\tikzfig{channelB}\!}\!\!  \in \Sigma_G$.
 \end{itemize}  
 \end{definition}

Intuitively, $\syncut{A}$ applied to a string diagram $f$ of
$\free{G}$ removes from $f$ each occurrence of a box with output wire
of type $A$.

Proposition~\ref{prop:imaps} allows us to ``transport'' the cutting
operation over to Bayesian networks. Given any Bayesian network based
on $G$, let $\funF \colon \free{G} \to \Stoch$ be the corresponding
CDU functor given by Proposition~\ref{prop:imaps}. Then, we can define
its $A$-cutting as the Bayesian network identified by the CDU functor
$\funF \circ \syncut{A}$. This yields an (idempotent) endofunctor
$\cut{A} \colon \BN{G} \to \BN{G}$.


\section{The Comb Factorisation}\label{sec:comb}

Thanks to the developments of Section~\ref{sec:surgery}, we can
understand the transition from left to right
in~\eqref{eq:smoking-with-latent-cut} as the application of the
functor $\cut{S}$ applied to the `Smoking' node $S$. The next
step is being able to actually compute the individual
$\Stoch$-morphisms appearing in~\eqref{eq:smoking-hidden-2}, to give
an answer to the causality question.

\begin{wrapfigure}{r}{0pt}
$\;\scalebox{0.7}{\tikzfig{line_yank}}\;$
\end{wrapfigure}
In order to do that, we want to work in a setting where $\mat t \colon S
\to T$ can be isolated and `extracted' from
\eqref{eq:smoking-hidden-2}. What is left behind is a stochastic
matrix with a `hole' where $\mat t$ has been extracted. To define `morphisms
with holes', it is convenient to pass from SMCs to compact closed categories
(see e.g.~\cite{Sel2009-graphical}). $\Stoch$ is not itself compact closed, but it
embeds into $\MatRp$, whose morphisms are \textit{all} matrices over
positive numbers. $\MatRp$ has a (self-dual) compact closed structure;
that means, for any set $A$ there is a `cap' $\cap \colon A \tns A \to
I$ and a `cup' $\cup \colon I \to A \tns A$, which satisfy the `yanking'
equations on the right.  As matrices, caps and cups are defined by
$\cap_{ij} = \cup^{ij} = \delta_i^j$. Intuitively, they amount to
`bent' identity wires. Another aspect of $\MatRp$ that is useful to
recall is the following handy characterisation of the subcategory
$\Stoch$.

\begin{lemma} \label{prop:Stoch_morphisms_in_Mat}
A morphism $\mat f \colon A \to B$ in $\MatRp$ is a stochastic matrix (thus
a morphism of $\Stoch$) if and only if \eqref{eq:discarding-final}
holds.
\end{lemma}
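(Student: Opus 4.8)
The plan is to prove both directions of the biconditional by unravelling what equation~\eqref{eq:discarding-final} says for a morphism $\mat f \colon A \to B$ in $\MatRp$. Recall that \eqref{eq:discarding-final} is the equation $\counit \after \mat f = \counit$, where on the left $\counit$ is the discarding map $B \to I$ and on the right it is the discarding map $A \to I$. In $\MatRp$, the morphism $\counit \colon B \to I$ is the $1 \times |B|$ row vector all of whose entries are $1$ (this is forced: the CDU structure on $\MatRp$ restricts to that of \Stoch, and in any case $\left(\counit\right)_j = 1$ as in \eqref{eq:stoch-copy-uniform}), and similarly for $\counit \colon A \to I$.

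First I would compute the left-hand side entrywise. Composition in $\MatRp$ is matrix multiplication, so $(\counit \after \mat f)_i = \sum_{j \in B} \left(\counit\right)_j \mat f_i^j = \sum_{j \in B} \mat f_i^j$. The right-hand side is $\left(\counit\right)_i = 1$ for every $i \in A$. Hence \eqref{eq:discarding-final} holds for $\mat f$ if and only if $\sum_{j \in B} \mat f_i^j = 1$ for all $i \in A$.

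Now the two directions are immediate. For the forward direction: if $\mat f$ is a stochastic matrix then by definition its entries lie in $\mathbb{R}^+$ and each column sums to $1$, which is exactly the condition just derived, so \eqref{eq:discarding-final} holds. For the converse: $\mat f$ is already assumed to be a morphism of $\MatRp$, so all its entries are non-negative reals; if in addition \eqref{eq:discarding-final} holds then by the computation above every column sums to $1$, so $\mat f$ satisfies both defining conditions of a stochastic matrix and is therefore a morphism of \Stoch. This establishes the equivalence.

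The argument is essentially a one-line index computation, so there is no real obstacle; the only thing to be careful about is making explicit that membership in $\MatRp$ already supplies the non-negativity half of the definition of ``stochastic matrix'', so that equation~\eqref{eq:discarding-final} genuinely only needs to encode the column-sum condition. It is also worth noting in passing that this is why $\Stoch$ sits inside $\MatRp$ precisely as the subcategory singled out by the discarding map, which is the fact that makes the compact closed embedding useful for the comb constructions that follow.
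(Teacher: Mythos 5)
Your proof is correct, and it is exactly the evident index computation the paper has in mind (the lemma is stated without proof there, being immediate from the definitions): $\counit \after \mat f = \counit$ unpacks entrywise to the column-sum condition, while membership in $\MatRp$ supplies non-negativity. Nothing to add.
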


A suitable notion of `stochastic map with a hole' is provided
by a \emph{comb}. These structures originate in the study of certain
kinds of quantum channels~\cite{comb}.


\begin{definition}
\label{def:comb}
A $2$-comb in $\Stoch$ is a morphism $\mat f \colon A_1 \otimes A_2
\to B_1 \otimes B_2$ satisfying, for some other morphism $\mat f'
\colon A_1 \to B_1$,
\begin{equation}\label{eq:2-comb}
\scalebox{0.8}{\tikzfig{2-comb}}
\end{equation}
\end{definition}

This definition extends inductively to \textit{$n$-combs}, where we
require that discarding the rightmost output yields $\mat f' \otimes
\counit$, for some $(n-1)$-comb $\mat f'$. However, for our purposes,
restricting to 2-combs will suffice.

The intuition behind condition \eqref{eq:2-comb} is that the
contribution from input $A_2$ is only visible via output $B_2$. Thus,
if we discard $B_2$ we may as well discard $A_2$. In other words, the
input/output pair $A_2,B_2$ happen `after' the pair $A_1, B_1$. Hence,
it is typical to depict 2-combs in the shape of a (hair) comb, with 2
`teeth', as in \eqref{eq:2ord-comb} below:

\noindent\begin{minipage}{.4\linewidth}
\begin{equation}\label{eq:2ord-comb}
  \scalebox{0.8}{\tikzfig{2-comb-ch}} \leadsto\;\;
   \scalebox{0.8}{\tikzfig{2-comb2}}
\end{equation}
\end{minipage}%
\begin{minipage}{.6\linewidth}
\begin{equation}\label{eq:2-comb-compose}
  \scalebox{0.8}{\tikzfig{2-comb-compose}}
\end{equation}
\end{minipage}

\noindent While combs themselves live in \Stoch, $\MatRp$ accommodates
a second-order reading of the transition $\leadsto$ in~\eqref{eq:2ord-comb}:
we can treat $\mat f$ as a map which expects as input a
map $\mat g \colon B_1 \to A_2$ and produces as output a map of type $A_1 \to
B_2$. Plugging $\mat g \colon B_1 \to A_2$ into the 2-comb can be
formally defined in $\MatRp$ by composing $\mat f$ and $\mat g$ in the
usual way, then feeding the output of $\mat g$ into the second input
of $\mat f$, using caps and cups, as in~\eqref{eq:2-comb-compose}.

Importantly, for generic $\mat f$ and $\mat g$ of $\Stoch$, there is no
guarantee that forming the composite~\eqref{eq:2-comb-compose}
in $\MatRp$ yields a valid $\Stoch$-morphism, i.e. a morphism satisfying the
finality equation~\eqref{eq:discarding-final}. However, if $\mat f$ is a
2-comb and $\mat g$ is a $\Stoch$-morphism, equation \eqref{eq:2-comb} enables
a discarding map plugged into the output $B_2$ in \eqref{eq:2-comb-compose} to
`fall through' the right side of $\mat f$, which guarantees that the composed
map satisfies the finality equation for discarding. Hence the composition indeed yields a \Stoch-morphism:
\[
\qquad\tikzfig{2-comb-channel-flip}
\]



With the concept of 2-combs in hand, we can state our factorisation result.

\begin{theorem} 
\label{thm:combuniqueextraction}
For any state $\mat \omega \colon I \to A \otimes B \otimes C$ of
$\Stoch$ with full support, there exists a unique 2-comb $\mat f : B
\to A \otimes C$ and stochastic matrix $\mat g \colon A \to B$ such
that, in $\MatRp$:
\vspace*{-1em}
\begin{equation}\label{eq:comb-disint}
\scalebox{0.8}{\tikzfig{comb-disint}}
\end{equation}
\end{theorem}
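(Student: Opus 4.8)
The plan is to reduce the comb factorisation to ordinary two-variable disintegration (Proposition~\ref{thm:disintStoch}) applied twice, using full support to get both existence and uniqueness. First I would disintegrate $\mat\omega \colon I \to A \otimes B \otimes C$ with respect to the bipartition $(A\otimes B) \mid C$, or more conveniently the partition $B \mid (A \otimes C)$: since $\mat\omega$ has full support, there is a unique state $\mat\beta \colon I \to B$ and a unique channel $\mat h \colon B \to A \otimes C$ with $\mat\omega = (\idmap_{B} \tns \mat h) \after \comult \after \mat\beta$ (suitably permuted so that the copied $B$ is one output and $\mat h$ produces the $A,C$ outputs). The full support of $\mat\omega$ forces $\mat\beta$ to have full support as well (it is the $B$-marginal), and it forces $\mat h$ to have full support. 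The candidate comb will be built from $\mat h$ together with $\mat\beta$, and the candidate channel $\mat g \colon A \to B$ will be obtained by a second disintegration, this time of the $A\otimes B$-marginal $\mat\omega_{AB} \colon I \to A\otimes B$ of $\mat\omega$, which again has full support; this yields a unique $\mat\alpha \colon I \to A$ and unique $\mat g \colon A \to B$ with $\mat\omega_{AB} = (\idmap_A \tns \mat g)\after\comult\after\mat\alpha$.

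The key step is then to assemble $\mat f \colon B \to A\otimes C$ in $\MatRp$ so that plugging $\mat g$ into $\mat f$ (via the cup/cap composition of~\eqref{eq:2-comb-compose}) reproduces $\mat\omega$, and to check that $\mat f$ is genuinely a $2$-comb in the sense of Definition~\ref{def:comb}. Concretely $\mat f$ should be the $\MatRp$-morphism that takes its $B$-input, copies it, feeds one copy through $\mat h$ to produce $A$ and $C$, and exposes the other copy of $B$ through a cup as the "hole" input — i.e. $\mat f$ is $\mat h$ precomposed with $\comult$ and bent appropriately, divided by / rescaled against $\mat\beta$ so that feeding a stochastic $\mat g$ back in restores normalisation. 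I would verify the comb condition~\eqref{eq:2-comb}: discarding the $C$-output of $\mat f$ must yield $\mat f' \tns \counit$ for some $\mat f' \colon B \to A$; this follows because discarding $C$ from $\mat h$ gives (by finality, equation~\eqref{eq:discarding-final}) the $A$-part only, and the remaining dependence on the "hole" leg is exactly a discard. Then the defining equation~\eqref{eq:comb-disint} is checked by plugging $\mat g$ into $\mat f$ and using the two disintegration identities above together with the CDU axioms~\eqref{eq:CDUaxioms} and yanking; essentially one re-glues the two disintegrations of $\mat\omega_{AB}$ and $\mat\omega$ along the shared $B$-wire.

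For uniqueness, suppose $(\mat f_1,\mat g_1)$ and $(\mat f_2,\mat g_2)$ both satisfy~\eqref{eq:comb-disint}. Discarding $C$ from both sides and using the comb condition collapses~\eqref{eq:comb-disint} to a disintegration of $\mat\omega_{AB}$, so uniqueness of disintegration (full support) gives $\mat g_1 = \mat g_2$; call it $\mat g$. Since $\mat g$ is a fixed stochastic matrix and (again by full support) has full support, the map "plug $\mat g$ into $(-)$" is injective on $2$-combs — here one uses that $\mat g$ having full support lets one invert the cap-composition, essentially disintegrating $\mat\omega$ itself along $B \mid (A\otimes C)$ and reading off $\mat f$ uniquely. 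Hence $\mat f_1 = \mat f_2$. The main obstacle I anticipate is purely bookkeeping: making precise the $\MatRp$-level manipulation whereby a stochastic channel is "divided out" using cups and caps, and confirming that the resulting $\mat f$ lands among the $2$-combs rather than arbitrary positive matrices — in other words, showing that the non-$\Stoch$ detour through $\MatRp$ is controlled exactly by the comb condition. Everything else is a diagrammatic reshuffle of two instances of Proposition~\ref{thm:disintStoch}.
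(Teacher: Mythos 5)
Your uniqueness argument is essentially the paper's: discard $C$, use the comb condition to collapse \eqref{eq:comb-disint} to a disintegration of the $A\otimes B$-marginal so that uniqueness of disintegration fixes $\mat g$, and then use that full support of $\mat\omega$ forces $\mat g_i^j\neq 0$ to cancel entrywise ($\mat\omega^{ijk}=\mat f_j^{ik}\mat g_i^j=(\mat f')_j^{ik}\mat g_i^j$ gives $\mat f=\mat f'$). That half is fine.

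The genuine gap is in the existence construction, precisely at the point you defer as ``bookkeeping''. Your first disintegration, along $B \mid (A\otimes C)$, produces $\mat\beta=P(B)$ and $\mat h=P(AC\mid B)$, but the comb required by \eqref{eq:comb-disint} is, entrywise, $\mat f_j^{ik}=\mat\omega^{ijk}/\mat g_i^j = P(A{=}i)\,P(C{=}k\mid A{=}i,B{=}j)$; it is not ``$\mat h$ precomposed with a copy and rescaled against $\mat\beta$'' --- one must divide by the channel $\mat g$, and such division is not an operation available in the CDU or compact-closed structure you are permitted to use, so the construction is not actually specified. Worse, your verification of the comb condition \eqref{eq:2-comb} fails as stated: discarding $C$ from $\mat h$ yields $P(A\mid B)$, which in general still depends on $B$, so finality of discarding alone does not give the required factorisation through a discard on the hole leg; the independence from $B$ only appears \emph{after} the division by $\mat g$, so the step you postponed is exactly where the mathematical content lies. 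The clean repair is to disintegrate in the other order, as the paper does: first factor $\mat\omega$ into its $A\otimes B$-marginal and a channel $\mat c\colon A\otimes B\to C$ (Proposition~\ref{thm:disintStoch}), then factor that marginal into $\mat a\colon I\to A$ and $\mat b\colon A\to B$; set $\mat g:=\mat b$ and build $\mat f$ compositionally from a copy of $\mat a$ and $\mat c$ (no division, everything stays in $\Stoch$). Then the comb condition is immediate from \eqref{eq:discarding-final} applied to $\mat c$, and \eqref{eq:comb-disint} follows by diagram deformation and the comonoid laws. (Your parenthetical gloss in the uniqueness part, that one ``reads off $\mat f$'' by disintegrating $\mat\omega$ along $B\mid(A\otimes C)$, is off for the same reason: that disintegration yields $\mat h$, which is not $\mat f$.)
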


\begin{proof}
The construction of $\mat f$ and $\mat g$ mimics the construction of
c-factors in~\cite{TianPearl}, using string diagrams and
(diagrammatic) disintegration.
Starting with a full-support $\mat \omega : I \to A \otimes B \otimes C$, we apply Theorem~\ref{thm:disintStoch} twice. First we can disintegrate $\mat \omega$ as $(\mat \omega' : I \to A \otimes B, \mat c : A \otimes B \to C)$ then further disintegrate $\mat \omega'$ into $(\mat a : I \to A, \mat b : A \to B)$:
\begin{equation}\label{eq:double-disint}
\tikzfig{double-disint}
\end{equation}
Now, we let:
\[
\tikzfig{f-def-comb-pf}
\qquad\qquad
\tikzfig{g-def-comb-pf}
\]
Then \eqref{eq:comb-disint2} holds by construction of $\mat a, \mat b, \mat c$:
\[
\tikzfig{comb-pf}
\]
Note the last step above is just diagram deformation and the comonoid laws. The rightmost diagram above is equal to $\mat \omega$ by \eqref{eq:double-disint}.

For uniqueness, suppose \eqref{eq:comb-disint} holds for some other $\mat f', \mat g'$. Then by uniqueness of disintegration, it follows that $\mat g' = \mat g = \mat b$. To show that $\mat f = \mat f'$, we expand \eqref{eq:comb-disint2} explicitly in terms of matrices. This equation is equivalent to $\mat \omega^{ijk} = \mat f_j^{ik} \mat g_i^j = (\mat f')_j^{ik} \mat g_i^j$. Note that if $\mat g$ had any zero elements, $\mat \omega$ would not have full support, hence $\mat g_i^j \neq 0$ and therefore $\mat f_j^{ik} = (\mat f')_j^{ik}$ for all $i,j,k$. \QED
\end{proof}

Note that Theorem~\ref{thm:combuniqueextraction} generalises the normal
disintegration property given in Theorem~\ref{thm:disintStoch}. The
latter is recovered by taking $A := I$ (or $C := I$) above.


\section{Returning to the Smoking Scenario}\label{sec:smoking-return}
 
\begin{wrapfigure}{r}{0pt}
$\scalebox{0.8}{\tikzfig{smoking-joint}}
\;=\; 
\scalebox{0.8}{\tikzfig{smoking-hidden-factor}}$
\end{wrapfigure}
We now return to the smoking scenario of Section~\ref{sec:smoking}.
There, we concluded by claiming that the introduction of an
intermediate variable $T$ to the observational distribution
$\mat\omega : I \to S \otimes T \otimes C$
would enable us to calculate the interventional distribution. That
is, we can calculate
$\mat \omega' = \mathcal F(\syncut{S}(\omega))$ from
$\mat \omega := \mathcal F(\omega)$. Thanks to Theorem
\ref{thm:combuniqueextraction}, we are now able to perform that
calcuation. We first observe that our assumed causal structure for
$\mat \omega$ fits the form of Theorem~\ref{thm:combuniqueextraction},
where $\mat g$ is $\mat t$ and $\mat f$ is a 2-comb containing everything else,
as in the diagram on the side.

Hence, $\mat f$ and $\mat g$ are computable from $\mat \omega$. If we
plug them back together as in \eqref{eq:comb-disint},
we will get $\mat \omega$ back. However, if we insert a `cut' between $\mat f$
and $\mat g$:
\begin{equation}\label{eq:cut-derivation}
\scalebox{0.8}{\tikzfig{cut-derivation}}
\end{equation}
we obtain $\mat \omega' = \mathcal F(\syncut{S}(\omega))$.

We now consider a concrete example. Fix interpretations $S = T = C = \{0,1\}$ and let
$\mat \omega : I \to S \otimes T \otimes C$ be the stochastic matrix:
\[
\mat \omega :=
\left(
\begin{matrix}
   0.5  \\[-0.5em]  
   0.1  \\[-0.5em]  
   0.01 \\[-0.5em]  
   0.02 \\[-0.5em]  
   0.1  \\[-0.5em]  
   0.05 \\[-0.5em]  
   0.02 \\[-0.5em]  
   0.2      
\end{matrix}
\right)\ 
{\color{gray}
\begin{smallmatrix}
  \textrm{\footnotesize $\leftarrow P(S=0,T=0,C=0)$} \\
  \textrm{\footnotesize $\leftarrow P(S=0,T=0,C=1)$} \\
  \textrm{\footnotesize $\leftarrow P(S=0,T=1,C=0)$} \\
  \textrm{\footnotesize $\leftarrow P(S=0,T=1,C=1)$} \\
  \textrm{\footnotesize $\leftarrow P(S=1,T=0,C=0)$} \\
  \textrm{\footnotesize $\leftarrow P(S=1,T=0,C=1)$} \\
  \textrm{\footnotesize $\leftarrow P(S=1,T=1,C=0)$} \\
  \textrm{\footnotesize $\leftarrow P(S=1,T=1,C=1)$}
\end{smallmatrix}}
\]

\noindent Now, disintegrating $\mat \omega$:
\[
\scalebox{0.8}{\tikzfig{omega_SC}}
\qquad\textrm{gives}\qquad
\mat c \approx
\left(\begin{matrix}0.81 & 0.32\\0.19 & 0.68\end{matrix}\right)
\]
The bottom-left element of $\mat c$ is $P(C=1|S=0)$, whereas the bottom-right is $P(C=1|S=1)$, so this suggests that patients are $\approx 3.5$ times as likely to get cancer if they smoke ($68 \%$ vs. $19\%$). However, comb-disintegrating $\mat \omega$ using Theorem~\ref{thm:combuniqueextraction} gives $\mat g \colon S \to T$ and a comb $\mat f \colon T \to S \otimes C$ with the following stochastic matrices:
\[
\mat f \approx \left(\begin{matrix}0.53 & 0.21\\0.11 & 0.42\\0.25 & 0.03\\0.12 & 0.34 \end{matrix}\right)
\qquad
\qquad
\mat g \approx \left(\begin{matrix}0.95 & 0.41\\0.05 & 0.59\end{matrix}\right)
\]
Recomposing these with a `cut' in between, as in the left-hand side of \eqref{eq:cut-derivation}, gives the interventional distribution $\mat \omega' \approx (0.38, 0.11, 0.01, 0.02, 0.16, 0.05, 0.07, 0.22)$. Disintegrating:
\[
\scalebox{0.8}{\tikzfig{omega1_SC}}
\qquad\textrm{gives}\qquad
\mat c' \approx
\left(\begin{matrix}0.75 & 0.46\\0.25 & 0.54\end{matrix}\right).
\]
From the interventional distribution, we conclude that, in a
(hypothotetical) clinical trial, patients are about twice as likely to
get cancer if they smoke ($54 \%$ vs. $25\%$). So, since $54 < 68$, there
was \textit{some} confounding influence between $S$ and $C$ in our
observational data, but after removing it via comb disintegration, we
see there is still a signficant causal link between smoking and
cancer.

Note this conclusion depends totally on the particular observational
data that we picked. For a different interpretation of $\mat \omega$ in
$\Stoch$, one might conclude that there is \emph{no} causal connection, or even that smoking \textit{decreases} the chance of
getting cancer. Interestingly, all three cases can arise even when a na\"ive analysis of the data shows a strong direct correlation between $S$ and $C$. To see and/or experiment with these cases, we have provided the Python code\footnote{\url{https://gist.github.com/akissinger/aeec1751792a208253bda491ead587b6}} used to perform these calculations. See also \cite{NielsenBlog} for a pedagocical overview of this example (using traditional Bayesian network language) with some sample calculations.
 
\section{The General Case for a Single Intervention}\label{sec:general}

While we applied the comb decomposition to a particular example, this
technique applies essentially unmodified to many examples where we
intervene at a single variable (called $X$ below) within an arbitrary
causal structure.

\begin{theorem}\label{thm:single-intervention}
  Let $G$ be a dag with a fixed node $X$ that has corresponding generator $x \colon Y_1 \otimes \ldots \otimes Y_n \to X$ in $\free{G}$. Then, suppose $\omega$ is a morphism in $\free{G}$ of the following form:
  \begin{equation}\label{eq:syn-x}
    \scalebox{0.8}{\tikzfig{syn-x}}
  \end{equation}
  for some morphisms $f_1, f_2$ and $g$ in $\free{G}$ not containing $x$ as a subdiagram.
  Then the interventional distribution $\mat \omega' := \mathcal F(\syncut{X}(\omega))$ is computable from the observational distribution $\mat \omega = \mathcal F(\omega)$.
\end{theorem}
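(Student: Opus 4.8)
The plan is to push the factorisation visible in~\eqref{eq:syn-x} through the interpretation functor $\funF$, recognise the result as an instance of the comb factorisation of Theorem~\ref{thm:combuniqueextraction}, and then observe that the surgery functor $\syncut{X}$ acts on that factorisation in a completely controlled way, namely by replacing the first `tooth' of the comb with a uniform state. First I would apply $\funF$ to the string diagram~\eqref{eq:syn-x}. Since $\funF$ is a CDU functor it preserves all diagrammatic structure, so $\mat\omega = \funF(\omega)$ is a state $I \to A \tns B \tns C$ --- with $A := \funF(X)$ and $B, C$ the interpretations of the codomains of $g$ and $f_2$ --- assembled from $\funF(f_1), \funF(x), \funF(g), \funF(f_2)$ in exactly the shape of~\eqref{eq:syn-x}. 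Because in~\eqref{eq:syn-x} the block $f_2$ produces the last output $C$ and is fed only by $g$'s output and by the wires $f_1$ carries forward --- crucially, \emph{not} by $X$ itself --- the finality equation~\eqref{eq:discarding-final} lets a discard on $C$ fall through $f_2$; hence the morphism $\mat f \colon B \to A \tns C$ obtained from~\eqref{eq:syn-x} by erasing $g$ satisfies~\eqref{eq:2-comb}, i.e.\ is a $2$-comb, and $\mat\omega$ is precisely the comb-composition of $\mat f$ with $\mat g := \funF(g)$, an instance of~\eqref{eq:comb-disint}.

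Assuming $\mat\omega$ has full support, Theorem~\ref{thm:combuniqueextraction} now applies and tells us that this pair $(\mat f, \mat g)$ is the \emph{unique} one realising~\eqref{eq:comb-disint} for $\mat\omega$ and, more usefully, that it is \emph{computable from $\mat\omega$ alone}: run the iterated disintegration of Theorem~\ref{thm:disintStoch} and then read off $\mat f$ via the matrix identity $\mat f_j^{ik} = \mat\omega^{ijk}/\mat g_i^j$ established in that proof. By uniqueness, the comb and channel so computed coincide with $\funF$ of the syntactic fragments of~\eqref{eq:syn-x}.

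Next I would examine $\syncut{X}(\omega)$. The only generator of $\free{G}$ with output type $X$ is $x$, and $f_1, f_2, g$ contain no occurrence of it, so $\syncut{X}$ replaces just that box by $\uniform \circ \counit$ and leaves~\eqref{eq:syn-x} otherwise untouched; hence $\mat\omega' = \funF(\syncut{X}(\omega))$ is~\eqref{eq:syn-x} interpreted with $\funF(x)$ replaced by $\uniform \circ \counit$. Since $x$ sits inside $\mat f$, in the part of the first tooth that emits the $A$-leg, I would use the comonoid laws to absorb the discard on $x$'s former inputs (a copied $f_1$-output with one branch discarded is just that output; an $f_1$-output feeding only $x$ becomes discarded) and finality~\eqref{eq:discarding-final} to rewrite the discard left on $x$'s output as a discard of its inputs, concluding that the surgery sends $\mat f$ to the $2$-comb $\uniform \tns (\counit \circ \mat f) \colon B \to A \tns C$ while leaving $\mat g$ and the manner in which it plugs in untouched. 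Equivalently, in components, $(\mat\omega')^{ijk} = \tfrac{1}{|A|}\, \mat g_i^j \sum_{i'} \mat f_j^{i'k}$, with $\sum_{i'} \mat f_j^{i'k} = (\counit \circ \mat f)_j^k$; so $\mat\omega'$ is the comb-composition of $\mat g$ with $\uniform \tns (\counit \circ \mat f)$, again of the form~\eqref{eq:comb-disint}.

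It then follows immediately that $\mat\omega'$ is computable from $\mat\omega$: the modified comb $\uniform \tns (\counit \circ \mat f)$ is obtained from $\mat f$ by discarding one output and tensoring with the fixed uniform state, and $\mat f, \mat g$ are themselves computed from $\mat\omega$ as above. The step I expect to be the main obstacle is the one in the previous paragraph --- showing that performing the $x$-surgery \emph{inside} the comb coincides with discarding the comb's $A$-leg and feeding a uniform state in its place. This is exactly where the precise shape of~\eqref{eq:syn-x} matters: the absence of a direct wire from $X$ into $f_2$ (the string-diagrammatic avatar of the customary `no direct causal link' / front-door-type hypothesis) and the fact that the side wires $f_1$ carries forward to $f_2$ are genuinely undisturbed by cutting $x$. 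I would verify this carefully, most safely by expanding both diagrams into matrix components in the manner of the proof of Theorem~\ref{thm:combuniqueextraction}.
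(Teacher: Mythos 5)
Your proposal is correct and follows essentially the same route as the paper: interpret the factored diagram \eqref{eq:syn-x} into $\Stoch$, recognise it as a comb composition so that Theorem~\ref{thm:combuniqueextraction} recovers $\mat f$ and $\mat g$ from $\mat\omega$, then recompose with $\uniform \circ \counit$ inserted and check (via finality and the comonoid laws) that this recomposition equals $\mathcal F(\syncut{X}(\omega))$. The only difference is that you spell out the details the paper leaves implicit, including the full-support hypothesis needed to invoke the comb disintegration and the component-level verification of the surgery step.
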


\begin{proof}
  The proof is very close to the example in the previous section. Interpreting $\omega$ into \Stoch, we get a diagram of stochastic maps, which we can comb-disintegrate, then recompose with $\uniform \circ \counit$ to produce the interventional distribution:
  \begin{equation*}
    \scalebox{0.8}{\tikzfig{sem-x}} \ \ \leadsto\ \ 
    \scalebox{0.8}{\tikzfig{sem-x-cut}}
  \end{equation*}
  The RHS above is then $\mathcal F(\syncut{X}(\omega))$. \QED
\end{proof}

This is general enough to cover several well-known sufficient
conditions from the causality literature, including single-variable
versions of the so-called \textit{front-door} and \textit{back-door}
criteria, as well as the sufficient condition based on confounding
paths given by Pearl and Tian~\cite{TianPearl}. As the latter subsumes
the other two, we will say a few words about the relationship between
the Pearl/Tian condition and Theorem~\ref{thm:single-intervention}. In
\cite{TianPearl}, the authors focus on \textit{semi-Markovian} models,
where the only latent variables have exactly two observed children and
no parents. Suppose we write $A \leftrightarrow B$ if two observed
variables are connected by a latent common cause, then one can
characterise \textit{confounding paths} as the transitive closure of
$\leftrightarrow$. They go on to show that the interventional
distribution corresponding cutting $X$ is computable whenever there
are no confounding paths connecting $X$ to one of its children.

We can compare this to the form of expression $\omega$ in
equation~\eqref{eq:syn-x}. First, note this factorisation implies that
all boxes which take $X$ as an input must occur as sub-diagrams of
$g$. Hence, any `confounding path' connecting $X$ to its children
would yield at least one (un-copied) wire from $f_1$ to $g$, hence it
cannot be factored as \eqref{eq:syn-x}. Conversely, if there are no
confounding paths from $X$ to its children, then we can we can place
the boxes involved in any other confounding path either entirely
inside of $g$ or entirely outside of $g$ and obtain factorisation
\eqref{eq:syn-x}. Hence, restricting to semi-Markovian models, the no-
confounding-path condition from \cite{TianPearl} is equivalent to
ours. However, Theorem~\ref{thm:single-intervention} is slightly more
general: its formulation doesn't rely on the causal structure $\omega$
being semi-Markovian.

\vspace{-2mm}
\section{Conclusion and future work}\label{sec:conclusion}
\vspace{-1mm}

This paper takes a fresh, systematic look at the problem of causal
identifiability. By clearly distinguishing syntax (string
diagram surgery and identification of comb shapes) and semantics (comb-
disintegration of joint states) we obtain a clear methodology for computing
interventional distributions, and hence causal effects, from observational
data.

A natural next step is moving beyond single-variable interventions to
the general case, i.e. situations where we allow interventions on
multiple variables which may have some arbitrary causal relationships
connecting them. This would mean extending the comb factorisation
Theorem~\ref{thm:combuniqueextraction} from a 2-comb and a channel to
arbitrary $n$-combs. This seems to be straightforward, via an inductive
extension of the proof of Theorem~\ref{thm:combuniqueextraction}.
A more substantial direction of future work will be the
strengthening of Theorem~\ref{thm:single-intervention} from sufficient
conditions for causal identifiability to a full characterisation.
Indeed, the related condition based on confounding paths from
\cite{TianPearl} is a necessary and sufficient condition for computing
the interventional distribution on a single variable. Hence, it will
be interesting to formalise this necessity proof (and more general
versions, e.g.~\cite{Huang2008}) within our framework and investigate,
for example, the extent to which it holds beyond the semi-Markovian
case.

While we focus exclusively on the case of taking models in \Stoch in this
paper, the techniques we gave are posed at an abstract level in terms of
composition and factorisation. Hence, we are optimistic about their prospects
to generalise to other probabilistic (e.g. infinite discrete and continuous
variables) and quantum settings. In the latter case, this could provide
insights into the emerging field of
\textit{quantum causal structures}~\cite{CostaShrapnel,PienaarBrukner,LeiferSpekkens,SpekkensInfer,HensonLalPusey}, which attempts in part to replay some of the results coming from statistical causal reasoning, but where quantum processes play a role
analogous to stochastic ones. A key difficulty in applying our framework to a category of quantum processes, rather than \Stoch, is the unavailability of `copy' morphisms due to the
quantum no-cloning theorem~\cite{NoCloning}. However, a recent proposal for
the formulation of `quantum common causes'~\cite{AllenCommonCause} suggests a
(partially-defined) analogue to the role played by `copy' in our
formulation constructed via multiplication of certain commuting Choi
matrices. Hence, it may yet be possible to import results from classical
causal reasoning into the quantum case just by changing the category of models.

\medskip

{\noindent\textbf{Acknowledgements.} FZ acknowledges support from EPSRC grant EP/R020604/1. AK would like to thank Tom Claassen and Elie Wolfe for useful discussions on causal identification criteria.}

\newpage

\bibliographystyle{plain}
\bibliography{main}



\end{document}